\newcolumntype{Y}{>{\centering\arraybackslash}X}
\newtheorem{theorem}{Theorem}[section]
\newtheorem{lemma}[theorem]{Lemma}
\newtheorem{definition}{Definition}[section]
\newcommand{\mycomment}[1]{}
\crefname{remark}{rmk}{rmks}
\DeclarePairedDelimiter{\ceil}{\lceil}{\rceil}
\tikzset{
decision/.style={
    ellipse,
    draw,
    text width=10em,
    text badly centered,
    inner sep=3pt
},
block/.style={
    rectangle,
    draw,
    text width=14em,
    text centered,
    rounded corners
},
cloud/.style={
    draw,
    ellipse,
    minimum height=2em
},
descr/.style={
    fill=white,
    inner sep=2.5pt
},
connector/.style={
    -latex,
    font=\scriptsize
},
rectangle connector/.style={
    connector,
    to path={(\tikztostart) -- ++(#1,0pt) \tikztonodes |- (\tikztotarget) },
    pos=0.5
},
rectangle connector/.default=-2cm,
straight connector/.style={
    connector,
    to path=--(\tikztotarget) \tikztonodes
}
}
\newcommand*{\centernot}{%
  \mathpalette\@centernot
}
\def\@centernot#1#2{%
  \mathrel{%
    \rlap{%
      \settowidth\dimen@{$\m@th#1{#2}$}%
      \kern.5\dimen@
      \settowidth\dimen@{$\m@th#1=$}%
      \kern-.5\dimen@
      $\m@th#1\not$%
    }%
    {#2}%
  }%
}
\newcommand{\independent}{\perp\mkern-9.5mu\perp}
\date{} 					
\begin{document}
\title{Non-parametric Replication of Instrumental Variable Estimates Across Studies}

\author{{Roy S. Zawadzki}\thanks{Corresponding Author: zawadzkr@uci.edu}\\
	Department of Statistics\\
	University of California, Irvine\\
	Irvine, CA\\
	\and
	{Daniel L. Gillen}\\
	Department of Statistics\\
	University of California, Irvine\\
	Irvine, CA\\
        \and
	for the Alzheimer’s Disease Neuroimaging Initiative\thanks{Data used in preparation of this article were obtained from the Alzheimer’s Disease
Neuroimaging Initiative (ADNI) database (adni.loni.usc.edu). As such, the investigators
within the ADNI contributed to the design and implementation of ADNI and/or provided data
but did not participate in analysis or writing of this report. A complete listing of ADNI
investigators can be found at:
\href{http://adni.loni.usc.edu/wp-content/uploads/how_to_apply/ADNI_Acknowledgement_List.pdf}{http://adni.loni.usc.edu/wp-content/uploads/how\_to\_apply/ADNI\_Acknowledgement\_List.pdf}}
}


\maketitle

\begin{abstract}
\noindent Replicating causal estimates across different cohorts is crucial for increasing the integrity of epidemiological studies. However, strong assumptions regarding unmeasured confounding and effect modification often hinder this goal. By employing an instrumental variable (IV) approach and targeting the local average treatment effect (LATE), these assumptions can be relaxed to some degree; however, little work has addressed the replicability of IV estimates. In this paper, we propose a novel survey weighted LATE (SWLATE) estimator that incorporates unknown sampling weights and leverages machine learning for flexible modeling of nuisance functions, including the weights. Our approach, based on influence function theory and cross-fitting, provides a doubly-robust and efficient framework for valid inference, aligned with the growing "double machine learning" literature. We further extend our method to provide bounds on a target population ATE. The effectiveness of our approach, particularly in non-linear settings, is demonstrated through simulations and applied to a Mendelian randomization analysis of the relationship between triglycerides and cognitive decline. 

\end{abstract}

\section{Introduction}

In recent years, researchers have come to recognize that science faces a replicability crisis.\citep{pashler_is_2012,ioannidis_why_2005} That is, many findings produced by one study are unable to be repeated in subsequent similar studies. The end result is a potential reduction in the scientific community's confidence in published results. In the context of medical research, many have highlighted instances of limited replicability. \citep{ioannidis_contradicted_2005,begley_raise_2012,mathur_toward_2023} Epidemiologists frequently undertake questions for which controlled experimental data cannot be feasibly produced. Among other factors, this leads to a chief concern of spurious results due to unmeasured confounding, further highlighting the importance of replicability across multiple cohorts to increase confidence in conclusions. As an illustrative example, consider the effect of elevated blood lipid levels on cognitive decline. Though some work has found a that elevated lipid levels are associated with Alzheimer's Disease (AD) incidence, others have reported null results.\citep{solomon_midlife_2009,kivipelto_apolipoprotein_2002,tan_plasma_2003,mielke_32-year_2010} 

One may hypothesize two natural reasons that explain the conflicting results in our example: unmeasured confounding and unaccounted for effect modification across studied cohorts. While much of the work on observational study methodology has centered around the former issue, there is comparatively less work that focuses on the latter. To relax the assumption that all confounders, or suitable proxies, were measured, we may turn to an instrumental variable (IV) approach via Mendelian randomization (MR). Briefly, MR involves using one or more single nucleotide polymorphism (SNP) that meets three main conditions: (i) it influences the level of triglycerides (relevance), (ii) it does not directly cause the cognitive decline (exclusion restriction), and (iii) there is no association any unobserved confounders (independence).\citep{baiocchi_tutorial_2014}. For example, an analysis by Proitsi et al. (2014) found a null effect between triglycerides and late onset AD via MR.\citep{proitsi_genetic_2014} Yet, considering the previous literature on this topic, it is natural to ask whether such findings can be replicated in other settings. This crucially involves a nuanced discussion surrounding how the conditions for replicability interacts with the causal assumptions required for internal validity.

Suppose we wish to compare causal estimates across two independent studies. This requires us to properly account for differences in the distribution of effect modifiers between each study such as medical history, concomitant medications, and lifestyle factors with many of these simultaneously being confounders. Traditionally, this may be done via survey weighting methodology. \citep{cole_generalizing_2010,ridgeway_propensity_2015,bernstein_adjustment_2023}. Nevertheless, there are three assumptions for such an approach to succeed: (i) all possible effect modifiers are measured, (ii) the distributions of these effect modifiers overlap (i.e. positivity), and (iii) the sampling weights are properly modeled. In many ways, these assumptions are similar to that of propensity score (PS) analyses in which we can require only balancing a correctly specified PS comprised of variables we are able to measure and that, furthermore, sufficiently overlap between treatment groups.\citep{rosenbaum_central_1983,brooks_squeezing_2013} Similar to the threats to internal validity posed by untestable causal assumptions, the threats to replicability are conceptually alike.

One path forward to relax the conditions required for sample weighting methods is by narrowing the scope of the target estimand from the whole population (i.e. the ATE) to a subset. By restricting the population of interest to be more homogeneous, the variability of all possible effect modifiers decreases, including those unobserved. Furthermore, the area of overlap is more likely to be proportionally larger. Through estimation of the local average treatment effect (LATE),\citep{imbens_identification_1994} an IV approach with a valid instrument will achieve the same effect while mitigating unobserved confounding. The aforementioned "narrowing" occurs because the IV approach identifies the causal effect of a “complier” population, defined as those whose level of exposure monotonically varies with the value of the IV. This means that unless we assume treatment effect homogeneity or that treatment effect heterogeneity is unrelated to treatment assignment, the LATE does not equal the ATE.\citep{wang_bounded_2018,hartwig_average_2023} Even so, the IV approach could allow us to replicate causal estimates with increased confidence. Unlike the ATE, however, there has been little research regarding how the LATE may be used for replicability.

In the current work, assuming a binary IV and treatment, we extend LATE estimators to incorporate unknown sampling weights into a weighted LATE (WLATE). The WLATE seeks to generalize the results from one study to the target distribution of another, thus allowing for the basis of replication of effect estimates across studies. While existing WLATE methods could be employed to provide analysts valid point estimates, they require one to assume the sampling weights are known for valid inference and are further restrictive to which models can be used to estimate nuisance functions such as the weights.\citep{choi_instrumental_2023} Using the theory of influence functions (IFs) and sample-splitting,\citep{kennedy_semiparametric_2023} our proposed method both accounts for the uncertainty in estimating the sample weights using a straightforward variance estimator and allows analysts to utilize machine learning (ML) to flexibly model nuisance functions in order to guard against potential misspecification. Furthermore, our estimator is doubly-robust with respect to within-study nuisance functions. 

The remainder of this work is organized as follows. First, we describe the necessary conditions and key estimand of our methodology that we call the survey weighted LATE (SWLATE). Then, we develop an IF-based approach and describe a novel across-study cross-fitting procedure to estimate the SWLATE. To our knowledge, there is no result deriving the IF and showing the regularity conditions and asymptotic properties of a weighted LATE when using ML to estimate unknown sampling weights. Following this, to ameliorate concerns over the narrow scope of the LATE, we extend our developed method to estimate bounds on the weighted ATE (WATE) based on the approach of Kennedy et al. (2020).\citep{kennedy_sharp_2020} Lastly, we demonstrate our estimators for the SWLATE and WATE both in simulation and in an MR analysis of the effect of triglycerides on cognitive decline.

\section{Notation and Set-Up}

Notationally, we assume a binary treatment $D \in \{0,1\}$ and a binary IV $Z \in \{0,1\}$. $Y(d)$ is the potential outcome under treatment assignment $d$ and $D(z)$ as the potential treatment assignment under IV assignment of $z$ and therefore $Y = DY(1) + (1-D)Y(0)$ and $D = ZD(1) + (1-Z)D(0)$. Denote $Y(d,z)$ as the potential outcome under treatment assignment $d$ and IV assignment $z$. We will assume $Y$ is continuous throughout this work. We additionally assume the standard assumption of the stable unit treatment value assumption (SUTVA) for all potential outcomes. Formally, the LATE and ATE are define as $\beta_{ATE} = E[Y(1) - Y(0)]$ and $\beta_{LATE} = E[Y(1) - Y(0)|D(1) > D(0)]$ where $D(1) > D(0)$ refers to the "principal strata" of compliers or those whose treatment status vary with the IV.\citep{angrist_identification_1996} The other principal strata are always-takers, $D(1) = D(0) = 1$, never takers, $D(1) = D(0) = 0$, and defiers, $D(1) < D(0)$.

A set of covariates $X$ may be required to satisfy the IV independence assumption.\citep{frolich_nonparametric_2007,abadie_semiparametric_2003} As such, we utilize the instrument propensity score (IPS) $e(X) = P(Z = 1|X)$. The conditions to identify the LATE via the IPS are similar to the that of ATE: positivity of the IPS and for the set of confounders to achieve strong ignorability. Additionally, conditioning on $X$ means the standard IV assumptions of relevance and monotonicity are strata-specific. Formally, we may define the assumptions to identify the LATE conditional as follows:
\begin{align*}
    &\textbf{Assumption 1 (Positivity of IPS): } 0 < e(X) < 1 \text{ a.s. for all } x \in \mathcal{S}_X \nonumber \\
    &\textbf{Assumption 2 (Independence): } Z \independent \{Y(0,0), Y(0,1), Y(1,0), Y(1,1), D(1), D(0)\} \,|\, X \nonumber \\
    &\textbf{Assumption 3 (Exclusion Restriction): } Y(d,0) = Y(d,1) \text{ for } d \in \{0,1\} \nonumber \\
    &\textbf{Assumption 4 (Relevance): } P[D(1) = 1|X] > P[D(0) = 1|X] \text{ a.s. for all } x \in \mathcal{S}_X \nonumber \\
    &\textbf{Assumption 5 (Monotonicity): } P[D(1) \ge D(0)|X] = 1 \text{ a.s. for all } x \in \mathcal{S}_X \nonumber
\end{align*}
\noindent where $\mathcal{S}_X$ refers to the support of $X$. In order to formulate the SWLATE, we include weights $w(X)$ in the LATE estimand outlined in Frolich (2007) Theorem 1, which leads us to the following functional form
\begin{equation}\label{eq:wlate}
    \beta_{SWLATE} = \frac{\int\big[\mu_1(x) - \mu_0(x)\big]w(x)f_X(x)dx}{\int\big[m_1(x) - m_0(x)\big]w(x)f_X(x)dx}
\end{equation}
\noindent where $\mu_z(X) = E[Y|X,Z = z]$ and $m_z(X) = E[D|X,Z = z]$. To define $w(X)$ for the purposes of replicability, we must outline some further notation and conditions. 

Let $A = (X^A,D^A,Y^A,Z^A)$ represent a sample of size $N_A$ from study A with distribution $P_A$ and $B = (X^B,D^B,Y^B,Z^B)$ a sample of size $N_B$ from study B with distribution $P_B$. We assume that all variables were collected in the same manner and that Assumptions 1-5 are met across both studies. Assuming that $P_A \ne P_B$, for example due to differences in effect modifiers between the samples, the study-specific LATEs, $\beta^A_{LATE}$ and $\beta^B_{LATE}$, are not equal. For the purposes of replicability, this is the crux of the problem and may be ameliorated via modifying one of the LATEs (we arbitrarily choose study B) via sampling weights. When this is the case, our target estimand is $\beta^A_{LATE}$ and we denote the SWLATE as $\beta^B_{SWLATE}$ because it is estimated over $P_B$. Thus Eq. \ref{eq:wlate} can be rewritten as
\begin{align}\label{eq:wlate_2}
    \beta^B_{SWLATE} = \frac{E_{P_B}[w(X)\{\mu_1(X) - \mu_0(X)\}]}{E_{P_B}[w(X)\{m_1(X) - m_0(X)\}]}.
\end{align}

Let $M_A$ and $M_B$ represent indicators of membership for study A and B, respectively. We construct weights based upon the probability of being sampled into study B: $\eta(X) = P(M_B = 1|X)$. Typically, these probabilities are unknown and must be estimated by constructing and fitting a model for $E[M_B|X]$ using the data from both $A$ and $B$. Consequentially, we must account for the uncertainty in estimating $\eta(X)$ in inference on $\beta^B_{SWLATE}$.

We assume the following conditions for our sampling weights: the subjects have a non-zero probability of being sampled into each study, both studies' conditional LATEs are equal within each possible strata of the covariates (i.e. no hidden strata-specific treatment effect heterogeneity), and that $\eta(X)$ is properly specified. Formally, letting $\mathcal{S}_X = \mathcal{S}^A_X \cup \mathcal{S}^B_X$ we have:
\begin{align*}
    &\textbf{Assumption 6 (Positivity of Sampling Weights): } 0 < \eta(X) < 1 \text{ a.s. for all } x \in \mathcal{S}_X \nonumber \\
    &\textbf{Assumption 7 (Equality of Strata Specific LATEs): }  \beta(x)^A_{LATE} = \beta(x)^B_{LATE} \text{ a.s. for all } x \in \mathcal{S}_X \nonumber\\
    &\textbf{Assumption 8 (Proper Specification of the Weights): } |\hat{\eta}(X)-\eta(X)| = o_p(1)
\end{align*}
\noindent With these conditions, we may utilize $w(X) = \frac{1-\eta(X)}{\eta(X)}$ to re-weight $\beta^B_{LATE}$ to replicate $\beta^A_{LATE}$. Assumption 7 encapsulates the previous argument that we may relax assumptions regarding unmeasured effect modification and positivity for the overall population by replacing them with similar assumptions for the complier population whose distribution of covariates varies less across studies.

\section{Nonparametric Estimation of the SWLATE with Unknown Sampling Weights}

To estimate Eq. \ref{eq:wlate_2}, we extend the IF and cross-fitting framework for the LATE as detailed in Kennedy (2023)\citep{kennedy_semiparametric_2023} to add survey weights $w(X)$. This allows us to non-parametrically identify and estimate the SWLATE with the use of ML (e.g. regularization, random forests, neural networks). Additionally, our estimator admits double-robustness for estimating the LATE within each study if either $m_1(X), m_0(X)$, $\mu_1(X)$, and $\mu_0(X)$ or $e(X)$ are correctly specified, which assists replicability by relaxing the assumptions for internal validity. When survey weights are unknown, our application of IF theory and sample-splitting both avoids complex derivations of asymptotic results using many simultaneous estimating equations and allows for flexible modeling of $w(X)$, protecting against potential misspecification.

In the remainder of the text, the empirical measure $\mathbb{P}_n$ refers to a sub-sample of $B$ created by sample-splitting. For example, suppose the data has been split into two partitions, $B^n$ and $B^N$, with $n = \ceil{\frac{N_B}{2}}$. Letting $\zeta_z = (\mu_z\{X\},m_z\{X\},e\{X\})$ represent the nuisance functions associated with study B, we can estimate $\zeta_z$ using $B^N$. Following this, we may fit $\eta(X)$ with both $B^N$ and $A$. Computing the predicted values for each nuisance function using $B^n$, we may adapt the estimand presented in Mao et al. (2019) Appendix Theorem 2\citep{mao_propensity_2019} to the following plug-in estimators for the uncentered IFs of the numerator and denominator of Eq. \ref{eq:wlate}:
\begin{align}
    \phi_{num}(B;\hat{w},\hat{\zeta_z}) &= \frac{\hat{w}(X)}{\mathbb{P}_n\{\hat{w}(X)\}}\bigg[\frac{Z}{\hat{e}(X)}\{Y-\hat{\mu}_1(X)\} - \frac{1-Z}{1-\hat{e}(X)}\{Y-\hat{\mu}_0(X)\} + \hat{\mu}_1(X) - \hat{\mu}_0(X)\bigg] \label{eq:num_estm} \\
    \phi_{denom}(B;\hat{w},\hat{\zeta_z}) &= \frac{\hat{w}(X)}{\mathbb{P}_n\{\hat{w}(X)\}}\bigg[\frac{Z}{\hat{e}(X)}\{D-\hat{m}_1(X)\} - \frac{1-Z}{1-\hat{e}(X)}\{D-\hat{m}_0(X)\} + \hat{m}_1(X) - \hat{m}_0(X)\bigg] \label{eq:denom_estm}
\end{align}

\noindent with point estimate $\hat{\beta}^B_{SWLATE} = \mathbb{P}_n\{\phi_{num}(B;\hat{w},\hat{\zeta_z})\}\big/\mathbb{P}_n\{\phi_{denom}(B;\hat{w},\hat{\zeta}_z)\}$. We define the following cross-fitting procedure involving data from two studies and use this to formulate the main result of the paper, Theorem \label{theorem_asymptotic}.

\begin{definition}[Cross-fitting Procedure for Two Studies]\label{prop:cross-fit}
To estimate $\hat{\beta}^B_{SWLATE}$ via cross-fitting we execute the following procedure:
\begin{enumerate}
    \item Randomly split $B$ into $K$ folds equally and $N_{B,k}$ sized. Let $k \in \{1,...K\}$ be an arbitrary fold. Define $I_{B,k}$ as the indices in $k$ and, similarly, $I^{c}_{B,k} := \{1...N_B\} \setminus I_{B,k}$.
    \item For each $k$:
    \begin{enumerate}
        \item Combine datasets into $C = (A, (B_i)_{i \in I^{c}_{B,k}})$.
        \item Use $C$ to estimate $\eta_{-k}$ where $-k$ refers to omitting the fold $k$.
        \item Use $(B_i)_{i \in I^{c}_{B,k}}$ to estimate the study-specific nuisance functions $\zeta_{z,-k}$
        \item With the "hold-out" sample $(B_i)_{i \in I_{B,k}}$, compute the "predictions" $\hat{w}_k(X)$ and $\hat{\zeta}_{z,k}$.
        \item Plug in these values into the estimators in Equations \ref{eq:num_estm} and \ref{eq:denom_estm} to obtain ${\hat{\beta}^B_{k,SWLATE}}$.
    \end{enumerate}
    \item Average across the folds to obtain the final estimate: $\hat{\beta}^B_{SWLATE} = \frac{1}{K}\sum_{k=1}^{K}{\hat{\beta}^B_{k,SWLATE}}$.
\end{enumerate}
\end{definition}

\begin{theorem}[Inference for the SWLATE via the Influence Function]\label{theorem_asymptotic}
 The following conditions must be met:
\begin{enumerate}
    \item $N_A/N_B \overset{p}{\to} c > 0$ where $c$ is some constant.
    \item Assumption 10 holds as well as $|\hat{\mu}_z - \mu_z| = o_P(1)$, $|\hat{m}_z - m_z| = o_P(1)$, and $|\hat{e}(X) - e(X)| = o_P(1)$ for $z \in \{0,1\}$.
    \item $\norm{\hat{e} - e}\underset{z \in \{0,1\}}{\max}(\norm{\hat{\mu_z} - \mu_z},\norm{\hat{m}_z - m_z}) = o_p(N_B^{-1/2})$
    \item $|Y| < C_Y$ a.s. for some constant $C_Y$
    \item $|\mu_z| < C_{\mu_z}$ and $m_z < C_{m_z}$ a.s. where $C_{\mu_z}$ and $C_{m_z}$ are some constants for $z \in \{0,1\}$
    \item $\hat{e},e > C_{e}$ and $\hat{\eta},\eta > C_{\eta}$ a.s. where $C_{e}$ and $C_{\eta}$ are some constants such that $C_{e}, C_{\eta} > 0$ 
    \item $\frac{\hat{w}(X)}{\mathbb{P}_n\{\hat{w}(X)\}} < C_{w}$ a.s where $C_w$ is some constant
\end{enumerate}
\noindent Using the estimator described in Definition \ref{prop:cross-fit}, we have that $N_B^{1/2}(\hat{\beta}^B_{SWLATE} - \beta^A_{LATE}) \overset{d}{\to} N(0,E[\Gamma^2])$ where
\begin{align*}
    \Gamma &= \frac{w(X)}{E_{P_B}[w(X)\{m_1(X) - m_0(X)\}]}\bigg\{\frac{2Z-1}{e(X,Z)}\bigg[Y-\mu_Z(X)-\beta^A_{LATE}\{D-m_Z(X)\}\bigg]\\
    &+ \mu_1(X) - \mu_0(X) - \beta^A_{LATE}\{m_1(X) - m_0(X)\}\bigg\}
\end{align*}
\noindent with $e(X,Z) = e(X)Z + \{1-e(X)\}(1-Z)$.
\end{theorem}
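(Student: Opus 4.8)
The plan is to treat $\hat\beta^B_{SWLATE}$ as a ratio of two cross-fitted averages and obtain a joint asymptotic linear expansion for the numerator and denominator, then apply the delta method for ratios around the point $(\beta^A_{LATE}$-rescaled$)$ numerator/denominator pair. I would begin by observing that, under Assumptions 6--8 together with Assumption 7 (equality of strata-specific LATEs) and the identification result of Frölich (2007) extended by Eq.~\eqref{eq:wlate}, the population ratio $\beta^B_{SWLATE}$ equals $\beta^A_{LATE}$; hence it suffices to show $N_B^{1/2}(\hat\beta^B_{SWLATE} - \beta^B_{SWLATE}) \overset{d}{\to} N(0, E[\Gamma^2])$ with the stated $\Gamma$. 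The first real step is to derive the (uncentered) efficient influence function of the functional $(w,\zeta_z)\mapsto \big(E_{P_B}[w\{\mu_1-\mu_0\}],\, E_{P_B}[w\{m_1-m_0\}]\big)$, accounting for the fact that $w(X)=(1-\eta(X))/\eta(X)$ is itself estimated from the pooled sample $C=(A,B^c)$. I would compute the pathwise derivative along one-dimensional submodels, splitting the score into the part acting on the $P_B$-distribution of $(X,Z,D,Y)$ and the part acting on the membership model $\eta$; the latter contributes the extra term coming from differentiating $w$, but because the numerator and denominator of the ratio share the same $w(X)$ multiplicatively and the target is the ratio, I expect the $\eta$-score contributions to \emph{cancel} in the linearization of the ratio — this is the analogue of the familiar fact that estimating the propensity/weighting model does not inflate (and here does not even enter) the asymptotic variance of a suitably self-normalized ratio estimand. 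Making this cancellation rigorous, and confirming it survives the two-study cross-fitting in Definition~\ref{prop:cross-fit} where $\eta$ is fit on $C$ and $\zeta_z$ on $B^c$, is the step I expect to be the main obstacle.

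Next I would establish the standard von~Mises / second-order remainder decomposition for each cross-fitted piece: writing $\mathbb{P}_{n}\{\phi_{num}(B;\hat w,\hat\zeta_z)\} = \mathbb{P}_{n}\{\phi_{num}(B;w,\zeta_z)\} + (\mathbb{P}_n-P)\{\phi_{num}(B;\hat w,\hat\zeta_z)-\phi_{num}(B;w,\zeta_z)\} + R_{num}$, and likewise for the denominator. The empirical-process ("drift") term is $o_P(N_B^{-1/2})$ by the cross-fitting argument of Kennedy (2023): conditionally on the training fold the centered summand is a fixed function, so its empirical process is $o_P(N_B^{-1/2})$ provided the $L_2$ consistency in condition~2 holds (boundedness conditions 4--7 and the bounded-weight condition 7 give the needed uniform integrability / Lindeberg conditions). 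The remainder $R_{num}$ is the usual product-of-errors bias term; here I would show it factorizes into cross-products $\lVert\hat e-e\rVert\cdot\max_z(\lVert\hat\mu_z-\mu_z\rVert,\lVert\hat m_z-m_z\rVert)$ plus terms involving $\lVert\hat\eta-\eta\rVert$ times a within-study error — the former is $o_P(N_B^{-1/2})$ by condition~3, and the latter vanishes in the \emph{ratio} because of the cancellation identified above (its leading contribution enters numerator and denominator identically and is killed by the delta method), which is precisely why only Assumption~8's $o_P(1)$ rate on $\hat\eta$, rather than a rate-product condition, is required. This is the doubly-robust structure: only one of $\{e\}$ or $\{\mu_z,m_z\}$ needs to be consistent for the bias to vanish, and the product rate in condition~3 is the Neyman-orthogonality "small bias" condition.

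Finally I would assemble the pieces: the numerator and denominator each satisfy $N_B^{1/2}(\mathbb{P}_n\{\hat\phi\} - \theta) = N_B^{1/2}(\mathbb{P}_n-P)\{\phi(\cdot;w,\zeta_z)\} + o_P(1)$, so by the multivariate CLT the pair converges jointly to a bivariate normal with covariance $E$ of the outer product of the (centered) influence functions $(\phi_{num}-\theta_{num},\ \phi_{denom}-\theta_{denom})$. Condition~1 ($N_A/N_B\to c>0$) ensures the pooled-sample fitting of $\eta$ does not introduce a differently-scaled source of randomness — and, combined with the cancellation, guarantees the study-$A$ sample contributes nothing to the first-order variance. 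Applying the delta method to $g(a,b)=a/b$ at $(\theta_{num},\theta_{denom})$, using $\theta_{num}/\theta_{denom} = \beta^A_{LATE}$ and $\nabla g = (1/\theta_{denom},\, -\theta_{num}/\theta_{denom}^2)$, collapses the linear combination $\tfrac{1}{\theta_{denom}}(\phi_{num}-\theta_{num}) - \tfrac{\theta_{num}}{\theta_{denom}^2}(\phi_{denom}-\theta_{denom})$ into exactly $\Gamma$ after substituting $\theta_{denom}=E_{P_B}[w\{m_1-m_0\}]$, grouping the $Z/e$ and $(1-Z)/(1-e)$ terms into $\tfrac{2Z-1}{e(X,Z)}$, and collecting the $\beta^A_{LATE}$ multiples of the $D$-residuals; averaging over the $K$ folds preserves this limit. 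Hence $N_B^{1/2}(\hat\beta^B_{SWLATE}-\beta^A_{LATE})\overset{d}{\to} N(0,E[\Gamma^2])$, as claimed.
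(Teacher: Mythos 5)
Your overall architecture matches the paper's: linearize the numerator and denominator separately, use sample-splitting/cross-fitting to make the empirical-process (``drift'') term $o_P(N_B^{-1/2})$ given only $L_2$-consistency of the nuisance estimates, bound the von Mises remainder by the product $\lVert\hat e - e\rVert\,\max_z\big(\lVert\hat\mu_z-\mu_z\rVert,\lVert\hat m_z - m_z\rVert\big)$, and combine the two expansions through a ratio linearization (the paper invokes Lemma S.1 of Takatsu et al.\ rather than writing out the delta method, but that is the same computation and yields the same $\Gamma$).

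The genuine gap is your treatment of the estimated weights. You claim that the $\eta$-score contributions \emph{cancel exactly} in the ratio because the numerator and denominator share the same multiplicative $w(X)$, and you lean on this cancellation twice: to justify omitting any $\eta$-component from the influence function, and to dispose of the $\lVert\hat\eta-\eta\rVert$ terms in the remainder. This cancellation is false in general. Perturbing $w$ by $\delta$ changes the ratio, to first order, by $E_{P_B}\big[\delta(X)\{(\mu_1-\mu_0)(X) - \beta\,(m_1-m_0)(X)\}\big]\big/E_{P_B}[w\{m_1-m_0\}]$, which vanishes for all directions $\delta$ only when the conditional LATE $(\mu_1-\mu_0)/(m_1-m_0)$ is constant in $X$ --- precisely the effect-homogeneity assumption this framework is built to avoid (Assumption 7 equates conditional LATEs \emph{across studies}, not across strata of $X$). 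The self-normalization by $\mathbb{P}_n\hat w$ removes only scale errors in $\hat w$, not $x$-dependent ones, and the analogy to ``estimating the propensity score does not inflate the variance'' does not transfer. The paper does not use any such cancellation: it proves a small lemma that $|\hat\eta-\eta|=o_P(1)$ implies $|\hat w - w|=o_P(1)$ via the lower bound $C_\eta$, shows $|\mathbb{P}_n\hat w - E_{P_B}[w]|=o_P(1)$ by conditioning on the training folds and applying Markov's inequality, and then controls the weight-induced piece of the remainder directly through $L_1$ convergence of $\hat w/\mathbb{P}_n\hat w$ to $w/E_{P_B}[w]$; the weights are explicitly \emph{not} doubly robust and their contribution is not eliminated by algebra. (The paper's own bound on that piece is only $o_P(1)$, which is why Assumption 8 is stated as consistency rather than a rate --- but whatever one makes of that step, an appeal to exact cancellation is not a valid substitute for it.) You correctly flagged this as the main obstacle; it is, and the mechanism you propose does not get through it.
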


From the results of Theorem \ref{theorem_asymptotic}, we may construct $(1-\alpha)$-level Wald confidence intervals for $\hat{\beta}^B_{SWLATE}$. Letting $q_{1-\alpha/2}$ be the $1-\alpha/2$ quantile of the standard normal distribution and $\hat{\Gamma}$ being the plug-in estimator for $\Gamma$, we have:
\begin{equation}
    \hat{\beta}^B_{SWLATE} \pm q_{1-\alpha/2}\sqrt{\frac{\mathbb{P}_{N_B}\hat{\Gamma}^2}{N_B}}.
\end{equation}

A notable property of our estimator is that by taking $E_{P_B}[\Gamma^2]$, by iterated expectation on $X$ and $Z$, we obtain the efficiency bound for the WLATE as given by Choi (2023) Theorem 1.\citep{choi_instrumental_2023} That is, our estimator achieves the lowest possible variance across all possible non-parametric WLATE estimators.\citep{kennedy_semiparametric_2023} In contrast with Choi (2023), we need not know the weights nor the IPS to achieve this lower bound. Altogether, this means that we may efficiently, non-parametrically, and flexibly compute a weighted LATE estimator to replicate findings from IV analyses across studies with all nuisance functions unknown.

\noindent 

\section{Weighted ATE Bounds with Unknown Sampling Weights}

Without restrictive and untestable assumptions regarding treatment effect heterogeneity, the LATE will not identify the ATE.\citep{hartwig_average_2022,wang_bounded_2018} Nevertheless, we may use the LATE to bound the ATE, which was first developed for non-compliance in clinical trials.\citep{manski_nonparametric_1990,robins_correcting_1994,balke_bounds_1997} Kennedy et al. (2020) generalized this to tighter bounds that allows conditioning on covariates tha we will utilize.\citep{kennedy_sharp_2020}

Let $Y$ be bounded and scaled such that $Y \in [0,1]$, then for $j \in \{l,u\}$, referring to the lower and upper bounds, Kennedy et al. (2020) defines the bound as
\begin{align}\label{eq:upper_bound_gen}
    \beta^j_{ATE} = E[E[V_{j,1}|X,Z = 1] - E[V_{j,0}|X, Z = 0]] 
\end{align}
for $V_{u,1} = YD + 1 - D$, $V_{l,1} = YD$, $V_{u,0} = Y(1-D)$, and $V_{l,0} = Y(1-D) + D$. The width of the interval $(\beta^l_{ATE},\beta^u_{ATE})$ is inversely proportional to the strength of the instrument: stronger instruments will yield smaller intervals. Using the results from the previous section, we can straightforwardly extend this approach to incorporate survey weights.


Letting $\zeta_{j,z} = (v_{j,z},e\{X\})$, the uncentered IF $\phi_{ATE,j}(B;w,\zeta_{j,z})$ takes the following form
\begin{equation}
    \phi_{ATE,j}(B;\hat{w},\hat{\zeta}_{j,z}) =  \frac{\hat{w}(X)}{\mathbb{P}_n\{\hat{w}(X)\}}\bigg[\frac{Z}{\hat{e}(X)}\{V_{j,1}-\hat{v}_{j,1}(X)\} - \frac{1-Z}{1-\hat{e}(X)}\{V_{j,0}-\hat{v}_{j,0}(X)\}
    + \hat{v}_{j,1}(X)- \hat{v}_{j,0}(X)\bigg]. \label{eq:weighted_estm_j} 
\end{equation}

\noindent Denoting our study weighted estimate for the bounds ATE of study A as $\hat{\beta}^B_{SWATE,j}$, we denote $\hat{\beta}^B_{SWATE,l} = \mathbb{P}_n\{\phi_{ATE,l}(B;\hat{w},\hat{\zeta}_{l,z})\}$ and $\hat{\beta}^B_{SWATE,u} = \mathbb{P}_n\{\phi_{ATE,u}(B;\hat{w},\hat{\zeta}_{u,z})\}$ as the lower and upper bounds, respectively. Each weighted bound can be fit with the procedure described in Definition \ref{prop:cross-fit}. Furthermore, inference follows from conditions identical to Theorem $\ref{theorem_asymptotic}$ since each bound is a essentially weighted ATE, we may use the results in the proof for the SWLATE numerator. Standard $1-\alpha$ level confidence intervals can be computed as
\begin{align}
    \hat{\beta}^B_{SWATE,j} \pm q_{1-\alpha/2}\sqrt{\frac{\mathbb{P}_{N_B}\hat{\Delta}_j^2}{N_B}}
\end{align}
\noindent where $\hat{\Delta}_j$ is the plug in estimator for the centered influence function, which takes the form
\begin{align}
    \Delta_j = \frac{w(X)}{E_{P_B}[w(X)]}\bigg\{\frac{2Z-1}{\hat{e}(X,Z)}[V_{j,z}-v_{j,z}(X)] + v_{j,1}(X) - v_{j,0}(X)\bigg\} - \beta^A_{ATE,j}.
\end{align}
\noindent From these results, analysts can compute bounds on the ATE based on the LATE that are weighted towards a target study population and, if desired, compute confidence intervals on these weighted bounds.

\section{Simulation}

We consider two general scenarios: a linear and non-linear data-generating mechanism (DGM). First, we generate six continuous covariates $X = (X_1,X_2,...,X_6)$, from a multivariate normal distribution with $N = 1500$. Each variable has mean 0 and variance 1.5 with two correlated blocks, $X_1,X_2,X_3$ and $X_4,X_5,X_6$, with a covariance of 0.3. Table \ref{tab:dgm_sim} defines all equations and coefficients for both the the linear and non-linear DGMs. First, we generate sampling probabilities of being in study $A$ with $P(M_B = 1) = 0.5$ and use these probabilities to sample 1500 observations with replacement form study $B$'s sample. Within each dataset, we assign the IV using the IPS with $P(Z=1)=0.5$. 

Following this, we generate compliance status based on the "compliance score," or probability of being a complier. We set $P(D(1) > D(0))$ at 0.2, 0.5, and 0.8 to represent a "Weak IV","Moderate IV", and "Strong IV", respectively. Letting $\delta(X) = P(D(1) > D(0)|X)$, we calculate $P(D(1) = D(0) = 1|X) = P(D(1) = D(0) = 0|X) = \frac{1-\delta(X)}{2}$ for each subject, or the probabilities of being an always-taker and never-taker, respectively. From these probabilities, a subject's principal strata is selected via a multinomial distribution where $S=0$ if a never-taker, $S=1$ if an always-taker, and $S=2$ if a complier. For subject $i$, the treatment assignment is $D_i = Z_iI(S_i=2) + S_iI(S_i \ne 2)$. 

In the outcome DGM, we set the Study A ATE (i.e. $\beta_1$) at 1. Our formulation in Table \ref{tab:dgm_sim} allows us to write the LATE in closed form as $\beta_{LATE}^A = \beta_1 + \sum_{i=2}^{7}\beta_iE[X_i|S = 2]$ where we obtain the ground-truth LATEs via averaging across 5000 simulations to estimate $E[X_i|S = 2]$ where we know the complier status. Notationally, we let $\beta_{LATE,L}^A$ refer to the LATE in study A under the linear DGM while $\beta_{LATE,NL}^A$ refers to the non-linear DGM. The same interpretation applies to $\beta_{LATE,L}^B$ and $\beta_{LATE,NL}^B$. 

To estimate the nuisance functions, we will compare utilizing generalized linear models (GLMs) to an ensemble of flexible models via the \texttt{SuperLearner} package.\citep{polley_superlearner_2024} For the ensemble, we will we use multivariate adaptive regression splines (EARTH), generalized additive models (GAMs), GLMs, random forest (RF), and recursive partitioning and regression trees (RPART). "SL" or "No SL" denotes whether the ensemble SuperLearner was used or not. Nuisance functions were estimated with cross-fitting with four splits of the data.

\begin{table}[]
    \centering
    \caption{Summary of Data-Generating Mechanisms for Simulation}
    \label{tab:dgm_sim}
    \begin{tabularx}{\textwidth}{|l|l|>{\raggedright\arraybackslash}p{6.3cm}|>{\raggedright\arraybackslash}p{4cm}|}
        \toprule
        \textbf{Model} & \textbf{DGM} & \textbf{Equation} & \textbf{Coefficient Values} \\
        \midrule
        Sampling Probability & Linear & $logit[P(M_B = 1|X=x)] = \alpha_0 + \sum_{i=1}^{6}\alpha_iX_i$ & $\alpha_0 = 0$, \newline $\alpha_1 = \alpha_2 = \alpha_3 = 0.4$, \newline $\alpha_4 = \alpha_5 = \alpha_6 = -0.4$\\
        & Non-Linear & $logit[P(M_B = 1|X=x)] = \alpha_0 + \alpha_1X_1 + \alpha_2X_2^2 + \alpha_3X_3^2 + \alpha_4\exp(X_4) + \alpha_5\sin(X_5) + \alpha_6X_6$ & \\
        \hline
        IPS & Linear & $logit[P(D(1) > D(0)|X=x)] = \theta_0 + \sum_{i=1}^{6}\theta_iX_i$ & $\gamma_0 = 0$, \newline $\gamma_1 = \gamma_2 = \gamma_3 = 0.1$, \newline $\gamma_4 = \gamma_5 = \gamma_6 = -0.1$\\
        & Non-Linear & $logit[P(Z=1|X=x)] = \gamma_0 + \gamma_1X_1 + \gamma_2\exp(X_2) + \gamma_3X_3^2 + \gamma_4X_4 + \gamma_5\cos(X_5) + \gamma_6X_6$ & \\
        \hline
        Compliance Score & Linear &  $logit[P(D(1) > D(0)|X=x)] = \theta_0 + \sum_{i=1}^{6}\theta_iX_i$ & $\theta_0 = 0$, \newline $\theta_1 = \theta_2 = \theta_3 = 0.4$, \newline $\theta_4 = \theta_5 = \theta_6 = -0.8$\\
        & Non-Linear & $logit[P(D(1) > D(0)|X=x)] = \theta_0 + \theta_1X_1^2 + \theta_2\exp(X_2) + \theta_3X_3 + \theta_4X_4^3 + \theta_5X_5 + \theta_6X_6 $& \\
        \hline
        Outcome & Linear & $Y = \beta_1D + \sum_{i=2}^{7}\beta_iDX_i + f(X) + \epsilon$, \newline where $f(X) = \sum_{i=1}^{6}\beta_iX_i$&  $\beta_1 = 1$, \newline $\beta_2 = \beta_3 = \beta_4 = 0.35$, \newline $\beta_5 = \beta_6 = \beta_7 = -0.35$, \newline
        $\beta_8 = \beta_9 = \beta_{10} = 1$, \newline
        $\beta_{11} = \beta_{12} = \beta_{13} = 0.5$
        \newline $\epsilon \sim N(0,1)$ \\
        & Non-Linear & where $f(X) = \beta_8X_1^2 + \beta_9X_2^2 + \beta_{10}\exp(X_3) + \beta_{11}X_4 + \beta_{12}\exp(X_5) + \beta_{13}\cos(X_6)$& \\
        \hline
    \end{tabularx}
\end{table}

\subsection{Results for the Survey Weighted LATE}

Table \ref{tab:simulation_results} details the results over 1000 with the ground truth results in the footnotes. The unweighted LATE estimation and inference is equivalent to that detailed in Kennedy (2023). The simulation results show our SWLATE estimator can successfully replicate the target study LATE across a variety of scenarios as well as estimate in closed form the variance from estimating the weights with our cross-fitting procedure. In the linear setting, regardless of the instrument strength both no SL and SL have minimal absolute bias for the LATE of Study A as well as the desired coverage. In the non-linear setting, the SL approach is able to adapt to non-linearity in the DGM, with almost no bias and the desired coverage. On the other hand, not utilizing SL produces both substantial bias, almost as much as not weighting, and under-coverage. For example, in the moderate IV non linear setting, the absolute relative bias is 12.05\%, almost as much as the unweighted SL. Nevertheless, the coverage of the weighted no SL estimator is better than the unweighted SL at 82.6\% and 45.7\%, respectively, which is largely a reflection of accounting for the estimation of the weights in our model-based estimator. These latter findings are mirrored the both the weak and strong instrument setting.

\begin{table}[h!]
\centering
\caption{Simulation Results for Linear and Non-Linear DGM}
\label{tab:simulation_results}

{\large \textbf{Weak IV}}
\vspace{0.2cm}

\begin{tabular}{lllcccccc}
\toprule
\multicolumn{2}{l}{\textbf{}} & \textbf{} & \multicolumn{1}{p{1.5cm}}{\centering \textbf{Point} \\ \textbf{Estimate}} & \multicolumn{1}{p{1.5cm}}{\centering \textbf{Bias} \\ \textbf{(\%)}} & \multicolumn{1}{p{2cm}}{\centering \textbf{Monte} \\ \textbf{Carlo} \\ \textbf{SE}} & \multicolumn{1}{p{1.5cm}}{\centering \textbf{Model-Based} \\ \textbf{SE}} & \multicolumn{1}{p{1.5cm}}{\centering \textbf{Coverage} \\ \textbf{(95\% CI)}} \\ 
\midrule
\multirow{4}{*}{Linear} & \multirow{2}{*}{Weighted} & No SL & 1.577 & -2.14 & 0.251 & 0.233 & 0.93 \\
& & SL & 1.605 & -0.40 & 0.255 & 0.240 & 0.941 \\
\cmidrule{2-8}
& \multirow{2}{*}{Unweighted} & No SL & 1.790 & 11.02 & 0.172 & 0.167 & 0.811 \\
& & SL & 1.787 & 10.86 & 0.173 & 0.167 & 0.822 \\
\midrule
\multirow{4}{*}{Non-Linear} & \multirow{2}{*}{Weighted} & No SL & 1.762 & 19.80 & 0.300 & 0.287 & 0.843 \\
& & SL & 1.469 & -0.17 & 0.191 & 0.181 & 0.936 \\
\cmidrule{2-8}
& \multirow{2}{*}{Unweighted} & No SL & 1.880 & 27.80 & 0.216 & 0.214 & 0.501 \\
& & SL & 1.636 & 11.23 & 0.141 & 0.137 & 0.773 \\
\bottomrule
\end{tabular}
\caption*{Ground truth: $\beta_{LATE,L}^A = 1.612$, $\beta_{LATE,L}^B = 1.786$; $\beta_{LATE,NL}^A = 1.471$, $\beta_{LATE,NL}^B = 1.636$}

\vspace{0.2cm}

{\large \textbf{Moderate IV}}
\vspace{0.2cm}

\begin{tabular}{lllcccccc}
\toprule
\multicolumn{2}{l}{\textbf{}} & \textbf{} & \multicolumn{1}{p{1.5cm}}{\centering \textbf{Point} \\ \textbf{Estimate}} & \multicolumn{1}{p{1.5cm}}{\centering \textbf{Bias} \\ \textbf{(\%)}} & \multicolumn{1}{p{2cm}}{\centering \textbf{Monte} \\ \textbf{Carlo} \\ \textbf{SE}} & \multicolumn{1}{p{1.5cm}}{\centering \textbf{Model-Based} \\ \textbf{SE}} & \multicolumn{1}{p{1.5cm}}{\centering \textbf{Coverage} \\ \textbf{(95\% CI)}} \\ 
\midrule
\multirow{4}{*}{Linear} & \multirow{2}{*}{Weighted} & No SL & 1.379 & -1.78 & 0.128 & 0.124 & 0.946 \\
& & SL & 1.400 & -0.29 & 0.130 & 0.126 & 0.953 \\
\cmidrule{2-8}
& \multirow{2}{*}{Unweighted} & No SL & 1.603 & 14.17 & 0.100 & 0.095 & 0.431 \\
& & SL & 1.601 & 14.03 & 0.100 & 0.094 & 0.435 \\
\midrule
\multirow{4}{*}{Non-Linear} & \multirow{2}{*}{Weighted} & No SL & 1.431 & 12.05 & 0.166 & 0.162 & 0.826 \\
& & SL & 1.273 & 0.08 & 0.105 & 0.101 & 0.943 \\
\cmidrule{2-8}
& \multirow{2}{*}{Unweighted} & No SL & 1.597 & 25.55 & 0.134 & 0.130 & 0.299 \\
& & SL & 1.447 & 13.76 & 0.087 & 0.083 & 0.457 \\
\bottomrule
\end{tabular}
\caption*{Ground truth: $\beta_{LATE,L}^A = 1.404$, $\beta_{LATE,L}^B = 1.604$; $\beta_{LATE,NL}^A = 1.272$, $\beta_{LATE,NL}^B = 1.447$}

\vspace{0.2cm}

{\large \textbf{Strong IV}}
\vspace{0.2cm}

\begin{tabular}{lllcccccc}
\toprule
\multicolumn{2}{l}{\textbf{}} & \textbf{} & \multicolumn{1}{p{1.5cm}}{\centering \textbf{Point} \\ \textbf{Estimate}} & \multicolumn{1}{p{1.5cm}}{\centering \textbf{Bias} \\ \textbf{(\%)}} & \multicolumn{1}{p{2cm}}{\centering \textbf{Monte} \\ \textbf{Carlo} \\ \textbf{SE}} & \multicolumn{1}{p{1.5cm}}{\centering \textbf{Model-Based} \\ \textbf{SE}} & \multicolumn{1}{p{1.5cm}}{\centering \textbf{Coverage} \\ \textbf{(95\% CI)}} \\ 
\midrule
\multirow{4}{*}{Linear} & \multirow{2}{*}{Weighted} & No SL & 1.219 & -0.51 & 0.085 & 0.084 & 0.951 \\
& & SL & 1.234 & 0.72 & 0.087 & 0.085 & 0.943 \\
\cmidrule{2-8}
& \multirow{2}{*}{Unweighted} & No SL & 1.461 & 19.25 & 0.074 & 0.069 & 0.082 \\
& & SL & 1.460 & 19.21 & 0.074 & 0.069 & 0.084 \\
\midrule
\multirow{4}{*}{Non-Linear} & \multirow{2}{*}{Weighted} & No SL & 1.252 & 10.05 & 0.121 & 0.119 & 0.825 \\
& & SL & 1.136 & -0.15 & 0.075 & 0.074 & 0.949 \\
\cmidrule{2-8}
& \multirow{2}{*}{Unweighted} & No SL & 1.445 & 26.95 & 0.105 & 0.101 & 0.14 \\
& & SL & 1.326 & 16.52 & 0.069 & 0.065 & 0.168 \\
\bottomrule
\end{tabular}
\caption*{Ground truth: $\beta_{LATE,L}^A = 1.225$, $\beta_{LATE,L}^B = 1.461$; $\beta_{LATE,NL}^A = 1.138$, $\beta_{LATE,NL}^B = 1.327$}
\end{table}

Overall, the simulation results show our SWLATE estimator can successfully replicate the target study LATE across a variety of scenarios. Additionally, our efficient, model-based SE is aligns with the Monte Carlo results with and without SL. Clearly, the impact of having to estimate the weights is reflected in the difference between the weighted and unweighted SEs. In the linear DGM, SL incurs only a marginal increase in variance with a similar point estimate while in the non-linear DGM, SL is necessary to prevent bias and undercoverage. 

\subsection{Results for the Weighted ATE Bounds}

For measuring the performance of our weighted ATE bounds, we again use 1000 simulations with the same DGM detailed above with a target ATE of 1. There are three main metrics measured across 1000 simulations: (i) the average of the lower bound estimate across simulations, (ii) the average of of the upper bound estimate across simulations, and (iii) the number of simulations whose intervals that cover the true ATE, which we will colloquially refer to as "coverage." Because the length of the ATE estimate interval is inversely proportional to IV strength, we present the results for strength ranging from $0.1$ to $0.9$. This also ensures that our coverage results are informative since wider intervals are more likely to contain the true ATE.

Figure \ref{fig:ate_results_plot} shows how the average bounds across simulations vary with the IV strength, while Table \ref{tab:ate_results_table} details those results along with the coverage. When the IV is of sufficient strength, the weighted bounds capture the true ATE for the target study at a far larger rate than the unweighted bounds. With a strong IV, the unweighted bounds almost always miss the true ATE while the weighted bounds increasingly center upon the target study ATE. At lower strengths, coverage was more a function of interval width than the result of a weighting. Similar to the LATE results, in the linear setting, SL had little to no impact on the results while in the non-linear setting, SL was crucial to ensuring that the bounds contained the target ATE at increasing strengths.

\begin{figure}[h!]
    \centering
    \includegraphics[scale = 0.4]{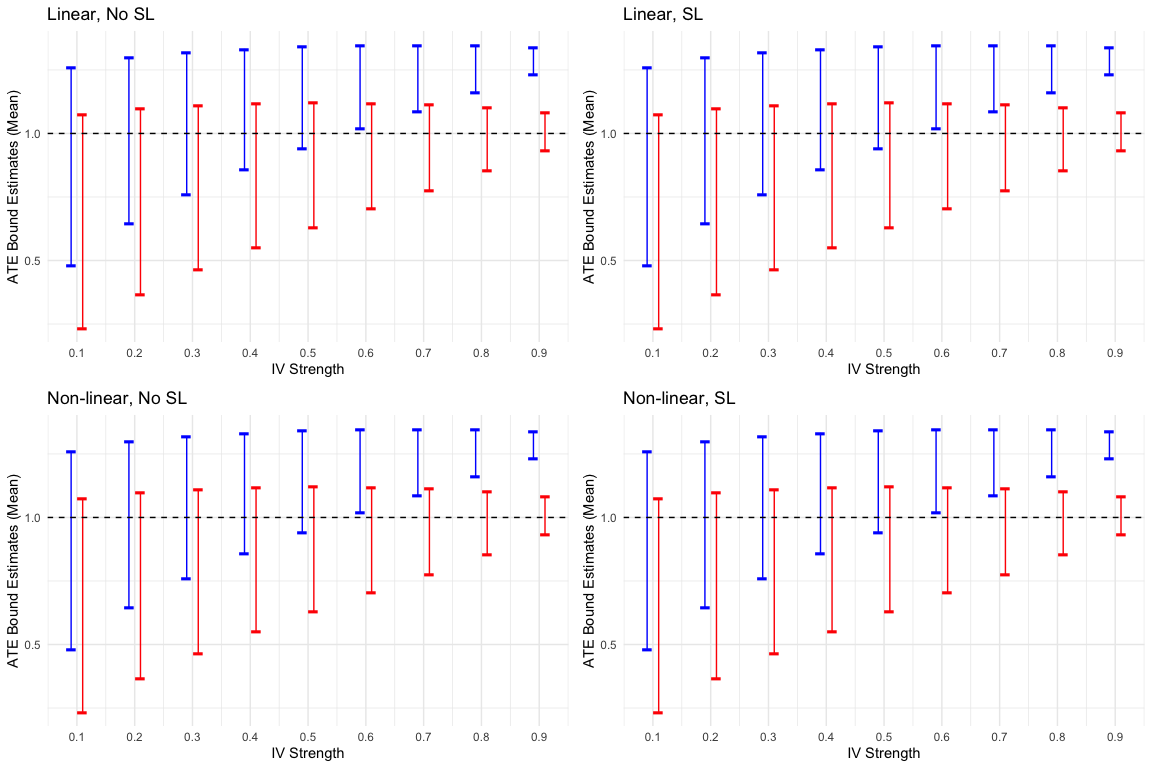}
    \caption{Simulation results from ATE bounds simulation. Blue lines are the unweighted bound estimators while red lines are the weighted bound estimators. The dotted black line indicates the target ATE ground truth of $1$.}
    \label{fig:ate_results_plot}
\end{figure}

\begin{table}[h!]
\centering
\caption{Simulation Results for ATE Bounds by Instrument Strength}
\sisetup{round-mode=places, round-precision=3}
\begin{tabular}{
    >{\bfseries}l 
    S[table-format=1.3] 
    S[table-format=1.3]
    S[table-format=1.3]
    S[table-format=1.3]
    S[table-format=1.3]
    S[table-format=1.3]
    S[table-format=1.3]
    S[table-format=1.3]
    S[table-format=1.3]
}
\toprule
\multicolumn{10}{c}{Linear, No SL} \\
\midrule
{Strength} & \textbf{0.100} & \textbf{0.200} & \textbf{0.300} & \textbf{0.400} & \textbf{0.500} & \textbf{0.600} & \textbf{0.700} & \textbf{0.800} & \textbf{0.900} \\
\midrule
Unweighted LB & 0.483 & 0.646 & 0.762 & 0.857 & 0.941 & 1.018 & 1.088 & 1.160 & 1.234 \\
Weighted LB   & 0.235 & 0.365 & 0.465 & 0.551 & 0.630 & 0.706 & 0.777 & 0.853 & 0.934 \\
\addlinespace
Unweighted UB & 1.261 & 1.299 & 1.320 & 1.333 & 1.342 & 1.348 & 1.348 & 1.347 & 1.339 \\
Weighted UB   & 1.077 & 1.100 & 1.111 & 1.118 & 1.121 & 1.120 & 1.114 & 1.104 & 1.082 \\
\addlinespace
Unweighted Coverage  & 0.999 & 1.000 & 0.996 & 0.965 & 0.779 & 0.385 & 0.108 & 0.009 & 0.000 \\
Weighted Coverage    & 0.867 & 0.931 & 0.935 & 0.940 & 0.949 & 0.947 & 0.941 & 0.915 & 0.723 \\
\midrule
\multicolumn{10}{c}{Linear, SL} \\
\midrule
Unweighted LB    & 0.481 & 0.643 & 0.760 & 0.854 & 0.937 & 1.015 & 1.085 & 1.157 & 1.232 \\
Weighted LB     & 0.244 & 0.373 & 0.471 & 0.557 & 0.634 & 0.710 & 0.780 & 0.856 & 0.937 \\
\addlinespace
Unweighted UB   & 1.258 & 1.296 & 1.316 & 1.330 & 1.339 & 1.346 & 1.345 & 1.345 & 1.339 \\
Weighted UB    & 1.081 & 1.104 & 1.115 & 1.123 & 1.126 & 1.126 & 1.119 & 1.110 & 1.089 \\
\addlinespace
Unweighted Coverage  & 0.999 & 1.000 & 0.996 & 0.965 & 0.794 & 0.406 & 0.132 & 0.009 & 0.000 \\
Weighted Coverage   & 0.867 & 0.928 & 0.937 & 0.939 & 0.955 & 0.948 & 0.937 & 0.904 & 0.721 \\
\midrule
\multicolumn{10}{c}{Non-linear, No SL} \\
\midrule
Unweighted LB    & 0.625 & 0.804 & 0.920 & 1.011 & 1.086 & 1.151 & 1.209 & 1.263 & 1.315 \\
Weighted LB     & 0.407 & 0.562 & 0.667 & 0.753 & 0.827 & 0.893 & 0.954 & 1.014 & 1.074 \\
\addlinespace
Unweighted UB    & 1.340 & 1.379 & 1.390 & 1.396 & 1.396 & 1.395 & 1.392 & 1.386 & 1.380 \\
Weighted UB    & 1.184 & 1.208 & 1.209 & 1.207 & 1.200 & 1.194 & 1.184 & 1.173 & 1.161 \\
\addlinespace
Unweighted Coverage  & 0.999 & 0.959 & 0.769 & 0.445 & 0.215 & 0.071 & 0.021 & 0.007 & 0.005 \\
Weighted Coverage    & 0.952 & 0.965 & 0.963 & 0.960 & 0.904 & 0.772 & 0.632 & 0.404 & 0.212 \\
\midrule
\multicolumn{10}{c}{Non-linear, SL} \\
\midrule
Unweighted LB     & 0.531 & 0.703 & 0.815 & 0.904 & 0.976 & 1.040 & 1.095 & 1.147 & 1.198 \\
Weighted LB     & 0.308 & 0.454 & 0.556 & 0.640 & 0.711 & 0.776 & 0.834 & 0.893 & 0.950 \\
\addlinespace
Unweighted UB   & 1.244 & 1.279 & 1.286 & 1.289 & 1.287 & 1.286 & 1.280 & 1.272 & 1.263 \\
Weighted UB    & 1.078 & 1.102 & 1.101 & 1.098 & 1.091 & 1.083 & 1.073 & 1.059 & 1.043 \\
\addlinespace
Unweighted Coverage  & 0.994 & 0.995 & 0.980 & 0.860 & 0.613 & 0.309 & 0.105 & 0.016 & 0.001 \\
Weighted Coverage  & 0.796 & 0.853 & 0.869 & 0.872 & 0.841 & 0.829 & 0.784 & 0.690 & 0.460 \\
\bottomrule
\end{tabular}
\label{tab:ate_results_table}
\end{table}

\section{Applied Example}

To demonstrate our methodology to replicate IV estimates across studies, we study the effect of triglycerides on cognitive decline by employing MR on data from the Alzheimer's Disease Neuroimaging Initiative (ADNI). ADNI is a multi-center, natural history longitudinal study that tracks cognitive performance among cognitively unimpaired, mild cognitively impaired, or cognitively impaired volunteers aged 55-90 who are in otherwise good health.\citep{petersen_alzheimers_2010} Here we consider two-year change from baseline in the clinical dementia rating sum of boxes (CDR-SB)\citep{obryant_staging_2008} for those with mild cognitive impairment (MCI). Higher CDR-SB scores indicate more severe cognitive decline. The baseline CDR-SB is the earliest visit with recorded MCI. For the final measurement, there must be a CDR-SB measurement 1.75 to 2.5 years after their first measure. Our models both adjust for baseline CDR-SB and time from baseline. Baseline covariates include medical history, age, body mass index (BMI), dementia family history, sex, APOE4 allele count, and years of education.

Using available genetic data, we constructed a binary IV by recording the presence of SNP \textit{rs1260326}, which is associated with elevated triglyceride levels.\citep{willer_discovery_2013} Our exposure, blood triglycerides levels, was binarized into "low" and "high" using the median level in our MCI cohort of 1.05 mmol/L. More details about how the genetic sequencing and lipid measurements were collected and about ADNI in general can be found at \href{https://adni.loni.usc.edu/about/}{https://adni.loni.usc.edu/about/}. Our final cohort had a total sample size of $n = 655$.

To demonstrate our methodology, we synthetically construct two cohorts by sampling the original data with replacement using sampling weights from the model 
\begin{align*}
    logit[P(M_B = 1|X=x)] &= \alpha_0 + \alpha_1 \text{I}_{\text{Cardiovascular Disease}} + \alpha_2 \text{I}_{\text{Neurological Disease}}\\ 
    &+ \alpha_3\text{I}_{\text{Male}} + \alpha_4\text{BMI} + \alpha_5\text{Baseline CDR-SB} + \alpha_6\text{Baseline Age}
\end{align*}
with $\alpha_0 = 0$ and $\alpha_1 = ... = \alpha_6 = 0.1$. Our model indicates those with higher values for the above covariates are more likely to be sampled in $B$. Henceforth, the original data is referred to as the target study and the sampled data the current study. Table \ref{tab:stratified_by_study} details the distributions of covariates across the cohorts. Binary IV strength is measured by the difference of the proportion of high triglyceride level subjects who do and do not have the gene, resulting in 0.13 for the current study and 0.11 for the target study, indicating an instrument of moderate strength.

\begin{table}[h!]
    \centering
    \caption{Baseline distributions of Covariates Stratified by Study}
    \label{tab:stratified_by_study}
    \begin{tabular}{lccc}
    \toprule
        \textbf{Variable} & \textbf{Target Study} & \textbf{Current Study} & \textbf{Standardized} \\
                          & \textbf{(n = 655)} & \textbf{(n = 655)} & \textbf{Mean Difference} \\
        \midrule
        Age (mean (SD)) & 72.96 (7.26) & 71.30 (7.29) & 0.228 \\
        Male (\%) & 394 (60.2) & 400 (61.1) & 0.019 \\
        Years Education (mean (SD)) & 15.98 (2.72) & 15.94 (2.61) & 0.014 \\
        BMI (mean (SD)) & 26.97 (4.77) & 28.75 (5.38) & 0.349 \\
        Family History (\%) & 371 (56.6) & 394 (60.2) & 0.071 \\
        Baseline CDR (mean (SD)) & 1.46 (0.88) & 1.58 (0.97) & 0.139 \\
        Time from Baseline (mean (SD)) & 2.03 (0.08) & 2.03 (0.08) & 0.042 \\
        APOE4 copies (\%) & & & 0.094 \\
        \quad 0 & 332 (50.7) & 314 (47.9) & \\
        \quad 1 & 256 (39.1) & 255 (38.9) & \\
        \quad 2 & 67 (10.2) & 86 (13.1) & \\
        Neurological Condition (\%) & 226 (34.5) & 244 (37.3) & 0.057 \\
        Cardiovascular Condition (\%) & 447 (68.2) & 472 (72.1) & 0.083 \\
        Metabolic Condition (\%) & 275 (0.42) & 288 (0.44) & 0.043 \\
    \bottomrule
    \end{tabular}
\end{table}

The results of our analysis are presented in \ref{tab:ADNI_results}. SL and no SL refer to the same estimators outlined in the previous section with five-fold cross-fitting for estimation. For a baseline ATE estimate, we ran OLS for both studies demonstrate a significant increase in CDR for those with higher triglycerides. To weight the current study OLS estimates, we computed survey weights via logistic regression and used 1000 bootstrap repetitions to calculate the variance. The weighted estimate does move towards the target study estimate but not notably so, indicating an opportunity to better model the weights.

\begin{table}[h!]
\centering
\caption{Estimated Effect of Triglycerides on Cognitive Decline with and without Weighting}
\label{tab:ADNI_results}
\begin{tabularx}{\textwidth}{XlYY} 
\toprule
\textbf{Estimation Method} & \textbf{Target Study} & \multicolumn{2}{c}{\textbf{Current Study}} \\
\cmidrule(lr){3-4}
 & & \textbf{Weighted} & \textbf{Unweighted} \\
\midrule
OLS & 0.34 (0.04, 0.65) & 0.59 (0.29, 0.93) & 0.63 (0.35, 0.93)\\
LATE & & & \\
\hspace{0.2in}No SL & 2.13 (-0.93, 5.56) & 3.53 (-2.46, 9.55) & 5.36 (0.95, 9.77) \\
\hspace{0.2in}SL & 1.44 (-1.7, 4.60) & 1.73 (-4.20, 7.78) & 2.73 (-3.50, 8.98)\\
ATE Bounds & & & \\
\hspace{0.2in}No SL & (-0.95, 0.16) & (-1.00, 0.08) & (-0.81, 0.25) \\
\hspace{0.2in}SL & (-1.11, 0.08) &(-0.93, 0.18) & (-0.77, 0.31)\\
\bottomrule
\end{tabularx}
\end{table}

The target study LATE, with and without SL, both have a positive point estimate but are not statistically significant. The unweighted LATE without SL in the current study shows a significant 5.36-point increase, but after weighting, the estimate attenuates toward the target LATE and becomes insignificant with our weighting method, reducing the relative difference from the target and current estimate from over 150\% to 66\%. SL for the current study LATE produced a lower point estimate and wider confidence intervals than the no SL equivalent (possibly a consequence of using flexible learners at lower sample-sizes) and, thus, was not statistically significant. After weighting the SL current study LATE, we see that the relative difference is reduced from 90\% to 20\%, showing notable attenuation towards the target SL point estimate.

Our ATE bounds in the target study for both no SL and SL suggest a null point estimate with the bounds crossing 0. The bounds for the unweighted current study estimates reach higher, which intuitively follows from the fact that the OLS and LATE estimates were higher. When we apply the weights, the bounds align well with the target study. For no SL, the percentage overlap between the ATE intervals increases from 87\% to 93\% while for SL, the percent overlap increases from 71\% to 84\%.

\section{Discussion}

Crucial to the trustworthiness of causal estimates from observational studies, including those found from MR, is whether they may be replicated across different populations. In this work, we have developed an approach to non-parametrically replicate the LATE across cohorts using possibly unknown survey weights. By focusing on the LATE as opposed to the ATE, our method has the inherent benefit of relaxing the untestable assumptions involved in replicability, including positivity and unmeasured effect modifiers. Explicitly, through the use of ML, our method can protect against functional misspecification of the survey weights. Our empirical results show that for all nuisance functions, using ensemble learning incurs little to no downside even when the DGM is linear, suggesting this approach is an acceptable default when the underlying DGM is unknown. Should we wish to target the ATE, our method performs well on providing bounds on the target ATE despite not making the accompanying causal assumptions, particularly in the case of a stronger IV. These bounds could be further narrowed by incorporating sampling weights with developing work on "covariate-assisted" bounds.\citep{levis_covariate-assisted_2023}

Despite double robustness on most nuisance functions, the survey weights do not share this property. Undoubtedly, the sensitivity to misspecification of the weights depends on the target estimand. Thus, future work could not only extend our work to estimands such as the local average treatment effect on the treated (LATT)\citep{frolich_identification_2013} but furthermore assess scenarios where replicability may be more successful in one estimand compared to other. These results could be used to derive an "optimally weighted" estimator for replicability. In the IV literature, key conditions like monotonicity have been relaxed through similar approaches, trading off "localness" of the estimand for validity.\citep{huntington-klein_instruments_2020} Finally, future work might consider extensions to binary outcomes, continuous IVs, and time-varying treatments.

\section{Appendix}

\subsection{Proof of Theorem \ref{theorem_asymptotic}}
\begin{proof}
    The overall structure of the proof is as follows. We largely follow the strategy outlined in Kennedy (2023) where we must show that $\hat{\beta}^A_{LATE} - \beta^A_{LATE} = S^* + T_1 + T_2$ takes a asymptotically linear form of $(\mathbb{P}_n - P)(\phi(B;w,\zeta_z)) + o_P(n^{-1/2})$.\citep{kennedy_semiparametric_2023} In other words, we have $S^* = (\mathbb{P}_n - P)\{\phi(B;w,\zeta_z)$ and show $T_1 + T_2 = o_P(n^{-1/2})$. We will establish this result for the numerator, whose influence function will be denoted by $\phi$, and skip proof for the denominator as it is similarly a weighted difference of conditional expectations. Once we have done this, we can apply Lemma S.1 from Takatsu et al. (2023), which states the ratio of two asymptotically linear estimators is also asymptotically linear.\citep{takatsu_doubly_2023}

    For simplicity of the proof, we assume that our estimation uses sample-splitting with $K = 2$, but not cross-fitting, as described in Proposition \ref{prop:cross-fit}. Suppose we had an i.i.d. sample from distribution $P_B$: $(B_1,B_2,...,B_{N_B})$ and that $n = \ceil{\frac{N_B}{2}}$, then we fit the nuisance functions with $B^N = (B_n+1,...,B_{N_B})$ and computed the predicted values over $(B_1,...B_n)$ where empirical measure $\mathbb{P}_n$ is over this independent partition. The fact that we are using sample-splitting means that our weights estimator only needs to be consistent for $\frac{w(X)}{E_{P_B}[w(X)]}$. Nevertheless, because study A is involved in estimating the weights, we will take the expectations over $P$, the joint distribution of $P_A$ and $P_B$.
 
    \begin{lemma}\label{lemma:eta_w}
        $|\hat{\eta} - \eta| = o_P(1)$ implies $|\hat{w} - w| = o_P(1)$
    \end{lemma}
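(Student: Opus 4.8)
The plan is to exploit the fact that the weight transformation $\eta \mapsto (1-\eta)/\eta$ is Lipschitz on the relevant range, so that the stated consistency of $\hat\eta$ transfers directly to $\hat w$. Recall that $w(X) = \frac{1-\eta(X)}{\eta(X)} = \frac{1}{\eta(X)} - 1$ and, correspondingly, $\hat w(X) = \frac{1}{\hat\eta(X)} - 1$, so it suffices to control the difference of the reciprocals.

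First I would invoke condition 6 of Theorem~\ref{theorem_asymptotic} (equivalently, Assumption 6 together with the boundedness imposed in the theorem), which guarantees $\eta(X) \ge C_\eta$ and $\hat\eta(X) \ge C_\eta$ almost surely for a fixed constant $C_\eta > 0$. This is the only structural input the argument needs: it keeps the reciprocal map uniformly well behaved. A direct computation then gives, pointwise in $X$,
\begin{equation*}
    \hat w(X) - w(X) = \frac{1}{\hat\eta(X)} - \frac{1}{\eta(X)} = \frac{\eta(X) - \hat\eta(X)}{\hat\eta(X)\,\eta(X)},
\end{equation*}
whence $|\hat w(X) - w(X)| \le C_\eta^{-2}\,|\hat\eta(X) - \eta(X)|$ almost surely.

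Finally, since $C_\eta^{-2}$ is a fixed constant and the bound holds pointwise (hence under whichever norm is used to express $|\hat\eta - \eta| = o_P(1)$, e.g.\ an $L_2(P)$ norm), the assumed consistency $|\hat\eta - \eta| = o_P(1)$ immediately yields $|\hat w - w| = o_P(1)$, as claimed. The only point worth flagging is that the argument relies essentially on the lower bound for $\hat\eta$ and not merely for $\eta$: without it the reciprocal transformation fails to be uniformly Lipschitz, and estimation error in $\hat\eta$ near zero could be amplified arbitrarily; the boundedness assumption in Theorem~\ref{theorem_asymptotic} is precisely what rules this out, so beyond that the proof is routine.
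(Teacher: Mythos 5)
Your proof is correct and is essentially identical to the paper's: both write $\hat w - w = (\eta - \hat\eta)/(\hat\eta\,\eta)$ and use the lower bound $\eta,\hat\eta > C_\eta$ from condition 6 of Theorem~\ref{theorem_asymptotic} to obtain $|\hat w - w| \le C_\eta^{-2}|\hat\eta - \eta|$. Your added remark that the lower bound on $\hat\eta$ (not just $\eta$) is the essential ingredient is a fair observation, but the argument itself matches the paper's.
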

    \begin{proof}
        By positivity and the bounding conditions outlined, $\eta,\hat{\eta} > C_{\eta}$ so by a simple arithmetic arrangement we have $|\hat{w} - w| = \bigg|\frac{\eta - \hat{\eta}}{\eta\hat{\eta}}\bigg| \le \frac{1}{C_{\eta}^2}|\hat{\eta} - \eta|$.
    \end{proof}
    
    \begin{lemma}
        $T_1 = o_P(n^{-1/2})$
    \end{lemma}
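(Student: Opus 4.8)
The plan is to identify $T_1$ as the empirical-process remainder created by plugging the fitted nuisance functions into the numerator influence function $\phi$ (i.e.\ $\phi_{num}$ of Eq.~\ref{eq:num_estm}) in place of the truth: $T_1 = (\mathbb{P}_n - P)\,g$ with $g := \phi(\cdot;\hat w,\hat\zeta_z) - \phi(\cdot;w,\zeta_z)$, where, per the proof's convention, $P$ integrates over a fresh observation with the estimated functions held fixed. The one structural fact that tames this term is the sample splitting in Definition~\ref{prop:cross-fit}: $\hat w$ and $\hat\zeta_z$ are built from $C=(A,(B_i)_{i\in I^c_{B,k}})$, which is independent of the hold-out fold over which $\mathbb{P}_n$ averages. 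Conditionally on $C$, then, $g$ is a fixed measurable function and $T_1$ is a centered average of $n$ i.i.d.\ summands. (The denominator term is handled identically, replacing $\mu_z$ by $m_z$ and $Y$ by $D$.)

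Granting this, the first step is a conditional second-moment bound. Since $T_1$ has conditional mean zero by construction, $\mathrm{Var}(n^{1/2}T_1\mid C) = \mathrm{Var}(g(B)\mid C) \le E[g(B)^2\mid C] = \|g\|_{L_2(P)}^2$, so conditional Chebyshev and then averaging over $C$ give $P(n^{1/2}|T_1|>\varepsilon) \le E[\min(1,\|g\|_{L_2(P)}^2/\varepsilon^2)]$ for every $\varepsilon>0$. It therefore suffices to show $\|g\|_{L_2(P)} = o_P(1)$; dominated convergence then yields $n^{1/2}T_1\overset{p}{\to}0$, i.e.\ $T_1 = o_P(n^{-1/2})$.

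The second step is the $L_2(P)$ bound on $g$. I would expand $g$ by repeated add-and-subtract into a finite sum of products in which each summand carries exactly one estimation-error factor --- for the numerator, $\hat\mu_1-\mu_1$, $\hat\mu_0-\mu_0$, $\hat e-e$ (entering through $1/\hat e-1/e$, with $|1/\hat e-1/e|\le|\hat e-e|/C_e^{2}$), and $\hat w/\mathbb{P}_n\hat w - w/E_{P_B}[w]$; the denominator additionally brings in $\hat m_z-m_z$ --- multiplied by a factor bounded a.s.\ by the constants in conditions 4--7 ($|Y|\le C_Y$; $|\mu_z|,|m_z|$ bounded; $\hat e,e,\hat\eta,\eta$ bounded away from $0$ and $1$; the self-normalized weight bounded by $C_w$). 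The triangle inequality and these uniform bounds then give $\|g\|_{L_2(P)} \lesssim \|\hat e-e\|_{L_2(P)} + \max_z\|\hat\mu_z-\mu_z\|_{L_2(P)} + \|\hat w/\mathbb{P}_n\hat w - w/E_{P_B}[w]\|_{L_2(P)}$, and the first two terms are $o_P(1)$ by the consistency hypotheses in condition 2.

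The step I expect to be the main obstacle --- the only one not pure bookkeeping --- is the self-normalized weight $\hat w(X)/\mathbb{P}_n\{\hat w(X)\}$, because the scalar normalizer $\mathbb{P}_n\{\hat w(X)\}$ lives on the evaluation fold and hence is not $C$-measurable, slightly spoiling the clean conditioning. I would absorb it by writing $\hat w/\mathbb{P}_n\hat w - w/E_{P_B}[w] = (\hat w - w)/\mathbb{P}_n\hat w + w\big(1/\mathbb{P}_n\hat w - 1/E_{P_B}[w]\big)$ and first noting that $\mathbb{P}_n\{\hat w(X)\}\overset{p}{\to}E_{P_B}[w(X)]\in(0,\infty)$: indeed $\mathbb{P}_n\hat w - E_{P_B}[w] = \mathbb{P}_n(\hat w-w) + (\mathbb{P}_n-P)w$, where $\mathbb{P}_n|\hat w-w| = o_P(1)$ by Lemma~\ref{lemma:eta_w} (whose hypothesis is Assumption 8) together with the a.s.\ boundedness of $\hat w,w$, and $(\mathbb{P}_n-P)w = o_P(1)$ by the law of large numbers for the bounded variable $w$; hence $\mathbb{P}_n\hat w$ is bounded below by a positive constant on an event of probability tending to one. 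On that event, $\|(\hat w-w)/\mathbb{P}_n\hat w\|_{L_2(P)}\lesssim\|\hat w-w\|_{L_2(P)}=o_P(1)$ (Lemma~\ref{lemma:eta_w}) and $w\big(1/\mathbb{P}_n\hat w-1/E_{P_B}[w]\big)$ is $o_P(1)$ in $L_2(P)$ by boundedness of $w$ and convergence of the scalar $1/\mathbb{P}_n\hat w$; a Slutsky-type argument then lets both enter the bound on $\|g\|_{L_2(P)}$ without affecting the conclusion. Combining, $\|g\|_{L_2(P)}=o_P(1)$ and therefore $T_1=o_P(n^{-1/2})$.
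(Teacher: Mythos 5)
Your proof is correct and follows essentially the same route as the paper's: both reduce $T_1$ to the empirical-process term $(\mathbb{P}_n - P)(\hat\phi - \phi)$, exploit sample splitting to get $T_1 = O_P(\|\hat\phi-\phi\|\,n^{-1/2})$ (you re-derive this via conditional Chebyshev where the paper cites Kennedy et al.\ (2020) Lemma 2), and then establish $\|\hat\phi-\phi\| = o_P(1)$ by decomposing into single nuisance-error factors times bounded terms and proving $\mathbb{P}_n\hat w \overset{p}{\to} E_{P_B}[w]$. Your explicit handling of the fact that the normalizer $\mathbb{P}_n\{\hat w(X)\}$ is computed on the evaluation fold and hence is not measurable with respect to the training data --- so that the clean conditioning argument needs the high-probability-event/Slutsky patch you describe --- is a subtlety the paper's proof glosses over, and you resolve it correctly.
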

    \begin{proof}
        Proving $T_1 = o_P(n^{-1/2})$ is an application of results of Kennedy et al (2020) Lemma 2, which states $T_1 = O_P\bigg(\frac{\norm{\hat{\phi} - \phi}}{n^{-1/2}}\bigg)$ due to sample-splitting.\citep{kennedy_sharp_2020} Therefore, we meet the condition for $T_1$ as long as $\norm{\hat{\phi} - \phi} = o_P(1)$. We can investigate this condition in more detail for $T(P_B) = \frac{E_{P_B}[w(X)E[Y|Z=1,X]]}{E_{P_B}[w(X)]}$, which naturally extends to our estimand:
    \begin{align}
        \hat{\phi} - \phi &= \frac{\hat{w}}{\mathbb{P}_n\hat{w}}\bigg\{\hat{\mu}_1 + \frac{Z(Y-\hat{\mu}_1)}{\hat{e}}\bigg\} - \frac{w}{E_{P_B}[w]}\bigg\{\mu_1 + \frac{Z(Y-\mu_1)}{e}\bigg\}\\
        &= \frac{\hat{w}\hat{\mu_1}}{\mathbb{P}_n\hat{w}}\big(1-\frac{Z}{\hat{e}}\big) - \frac{w\mu_1}{E_{P_B}[w]}\big(1-\frac{Z}{e}\big) + \frac{ZY}{\hat{e}e}\big(\frac{\hat{w}\hat{e}}{\mathbb{P}_n\hat{w}} - \frac{we}{E_{P_B}[w]}\big)\\
        &\le \big(\frac{\hat{w}\hat{\mu}_1}{\mathbb{P}_n\hat{w}} - \frac{w\mu_1}{E_{P_B}[w]}\big)\big(1+\frac{1}{C_e}\big) + \frac{C_Y}{{C_e}^2}\big(\frac{\hat{w}\hat{e}}{\mathbb{P}_n\hat{w}} - \frac{we}{E_{P_B}[w]}\big)
    \end{align}
    Therefore, taking the $L_2$ norm, we will have
    \begin{align}
        \norm{\hat{\phi} - \phi} \le \norm{\frac{\hat{w}\hat{\mu}_1}{\mathbb{P}_n\hat{w}} - \frac{w\mu_1}{E_{P_B}[w]}}\big(1+\frac{1}{C_e}\big) + \frac{C_Y}{{C_e}^2}\norm{\frac{\hat{w}\hat{e}}{\mathbb{P}_n\hat{w}} - \frac{we}{E_{P_B}[w]}} \label{eq:condition_t1_l2}
    \end{align}
    Thus, it is sufficient that $\norm{\frac{\hat{w}\hat{\mu}_1}{\mathbb{P}_n\hat{w}} - \frac{w\mu_1}{E_{P_B}[w]}} = o_P(1)$ and $\norm{\frac{\hat{w}\hat{e}}{\mathbb{P}_n\hat{w}} - \frac{we}{E_{P_B}[w]}} = o_P(1)$ for the whole term to be $o_P(1)$. This is achieved as long as we have consistency of the first term of each $L_2$ norm to the second term of each $L_2$ norm and smoothness of the various nuisance functions.\citep{chernozhukov_doubledebiased_2018,kennedy_semiparametric_2023} Using sample splitting, we can completely avoid Donsker conditions regarding the complexity of the nuisance functions. For $\hat{\mu}_1$, this is implied by assumption. In the following steps, we will show consistency for $\hat{w}$ and $\mathbb{P}_n\hat{w}$.
    \end{proof}
    
    To show  $|\mathbb{P}_n\hat{w} - E_{P_B}[w]| = o_P(1)$, we will proceed by Markov's inequality, noting that we have estimated $w$ via sample splitting and, therefore, conditioning on $B^N$ and $A$, the data from study A, will yield $\hat{w}$ fixed.
    \begin{align}
        P(|\mathbb{P}_n\hat{w} - E_{P_B}[w]| &\ge \epsilon) = E[P(|\mathbb{P}_n\hat{w} - E_{P_B}[w]| \ge \epsilon|\big|B^N,A)]\\
        &\le \epsilon^{-1}E[|\mathbb{P}_n\hat{w} - E_{P_B}[w]|\big|B^N,A]\\
        &= \epsilon^{-1}E[|\mathbb{P}_n\hat{w} - E_{P_B}[w] + \mathbb{P}_nw - \mathbb{P}_nw|\big|B^N,A]\\
        &\le \epsilon^{-1}E[|\hat{w} - w|] + E[|\mathbb{P}_nw - E_{P_B}[w]|] = o(1)
    \end{align}
    \noindent where the last inequality follows by a combination of triangle inequality and the fact that
    \begin{align*}
        E[|\mathbb{P}_n(\hat{w} - w)|\big|B^N,A] \le  n^{-1}\sum_iE\big[|\hat{w}(x_i) - w(X)|\big|B^N,A\big],
    \end{align*}
    \noindent an i.i.d. sum, because $w$ is now fixed, similar to the proof of Kennedy et al (2020) Lemma 2.\citep{kennedy_sharp_2020} Now, the first term goes to 0 by $|\hat{w} - w| = o_P(1)$ and uniform integrability because $\hat{w}$ and $w$ are bounded. The second term goes to 0 by weak law of large numbers and and uniform integrability due to boundedness.

    Now that we have established all the terms in \ref{eq:condition_t1_l2} are consistent, the whole term is consistent by Slutsky's theorem. Thus, we we will have $L_2$ convergence due to boundedness and, consequentially, $T_1 = O_P\bigg(\frac{\norm{\hat{\phi} - \hat{\phi}}}{n^{-1/2}}\bigg) = o_P(n^{-1/2})$.

    \begin{lemma}
        $T_2 = o_P(n^{-1/2})$
    \end{lemma}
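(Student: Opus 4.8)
The plan is to isolate $T_2$ as the ``drift'' (second-order bias) term in the expansions of the numerator and denominator estimators, to show that for each it splits into (i) the familiar doubly-robust product-of-errors terms in $(\hat e-e,\hat\mu_z-\mu_z)$ and $(\hat e-e,\hat m_z-m_z)$ and (ii) a term driven solely by the weight error $\hat w-w$, to bound the former using Condition~3, and to combine the latter across numerator and denominator so that the ratio structure and Assumption~7 force a cancellation. Because estimation uses sample splitting, once we condition on the auxiliary fold the estimates $\hat w,\hat e,\hat\mu_z,\hat m_z$ are fixed, so $T_2$ is a deterministic functional of them and needs no empirical-process control at this stage --- that was the role of $T_1$; Condition~1 ensures the pooled sample used for $\hat\eta$ is of order $N_B$.

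First I would make $T_2$ explicit for the (self-normalized) numerator functional $\psi^{*}_{num}=E_{P_B}[w(\mu_1-\mu_0)]/E_{P_B}[w]$. Writing $\hat\phi_{num}$ for the plug-in integrand in \eqref{eq:num_estm} and taking iterated expectations over $Z$ given $X$, so that $E[Z(Y-\hat\mu_1)\mid X]=e(X)\{\mu_1(X)-\hat\mu_1(X)\}$ and likewise for the $(1-Z)$ term, the AIPW-correction part of $P[\hat w\,\hat\phi_{num}]$ collapses to
\[
P\Big[\tfrac{\hat w}{\mathbb{P}_n\hat w}\Big(\tfrac{e-\hat e}{\hat e}(\mu_1-\hat\mu_1)+\tfrac{e-\hat e}{1-\hat e}(\mu_0-\hat\mu_0)\Big)\Big],
\]
while the plug-in part leaves $P[\hat w(\mu_1-\mu_0)]/\mathbb{P}_n\hat w$, whose deviation from $\psi^{*}_{num}$ equals $P[(\hat w-w)(\mu_1-\mu_0-\psi^{*}_{num})]/P\hat w$ by a one-line rearrangement. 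Hence $T_2^{num}$ is the sum of the displayed doubly-robust terms and $B_w^{num}:=P[(\hat w-w)(\mu_1-\mu_0-\psi^{*}_{num})]/P\hat w$, and the denominator decomposes identically with $m_z$ in place of $\mu_z$ and $\psi^{*}_{den}$ in place of $\psi^{*}_{num}$. For the doubly-robust terms, Conditions 4--7 ($\hat w/\mathbb{P}_n\hat w<C_w$, $\hat e>C_e$, $|Y|<C_Y$, $\hat\mu_z$ bounded) let me pull out the multipliers, and Cauchy--Schwarz gives a bound of order $\|\hat e-e\|\,\max_z(\|\hat\mu_z-\mu_z\|,\|\hat m_z-m_z\|)=o_P(N_B^{-1/2})=o_P(n^{-1/2})$ by Condition~3; this part is routine and mirrors Kennedy~(2023)\citep{kennedy_semiparametric_2023}.

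The crux is the weight term. After forming the ratio and invoking Lemma~S.1 of Takatsu et al.~(2023)\citep{takatsu_doubly_2023} to linearize $\hat\beta^B_{SWLATE}$ around $\beta^A_{LATE}=\psi^{*}_{num}/\psi^{*}_{den}$, the two weight-bias pieces combine into
\[
\frac{1}{\psi^{*}_{den}\,P\hat w}\,P\big[(\hat w-w)\big\{(\mu_1-\mu_0)-\beta^A_{LATE}(m_1-m_0)\big\}\big],
\]
because $\psi^{*}_{num}-\beta^A_{LATE}\psi^{*}_{den}=0$ annihilates the constant parts. By the conditional Wald identification together with Assumption~7, $(\mu_1-\mu_0)(x)=\beta_{LATE}(x)\,(m_1-m_0)(x)$ with $\beta_{LATE}(x)$ common to both studies, so the bracket reduces to $r(x):=\{\beta_{LATE}(x)-\beta^A_{LATE}\}\{m_1(x)-m_0(x)\}$, and by the very definition of $\beta^A_{LATE}$ as the $w$-weighted average of $\beta_{LATE}(\cdot)$ we get $E_{P_B}[w(X)r(X)]=0$. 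Thus the weight term equals $\{\psi^{*}_{den}P\hat w\}^{-1}E_{P_B}[(\hat w-w)\,r]$ with $r$ fixed and bounded (Conditions 5--6), and Cauchy--Schwarz together with Lemma~\ref{lemma:eta_w} bounds it by a constant times $\|\hat\eta-\eta\|\cdot\|r\|$. This is the step I expect to be the main obstacle: bare consistency $|\hat\eta-\eta|=o_P(1)$ (Assumption~8) delivers only $o_P(1)$, not $o_P(n^{-1/2})$, so to close the argument one needs Assumption~8 to supply an $o_P(n^{-1/2})$ rate on $\|\hat\eta-\eta\|$ --- which is what ``proper specification'' is meant to encode, e.g.\ a correctly specified parametric or fast-converging sieve model for the weights --- or, alternatively, a calibration/score property of the fitted weights making $\hat w-w$ asymptotically orthogonal to the residual $r$. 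Once this term is $o_P(n^{-1/2})$, combining it with the doubly-robust bound yields $T_2=o_P(n^{-1/2})$; together with $S^{*}=(\mathbb{P}_n-P)\phi$ and $T_1=o_P(n^{-1/2})$ this gives asymptotic linearity, and a final application of the ratio lemma delivers the stated normal limit with variance $E[\Gamma^2]$.
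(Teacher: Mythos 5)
Your decomposition of $T_2$ into a doubly-robust product term plus a weight-error term matches the paper's, and your handling of the product term --- iterated expectation over $Z$ given $X$, pulling out the bounded multipliers, then Cauchy--Schwarz and condition 3 of the theorem --- is exactly the paper's argument, which arrives at $|R_2(P,\hat P)|\le \frac{C_w}{C_{\hat e}}\|\hat e-e\|\,\|\hat\mu_1-\mu_1\| + C_{\mu_1}\int\bigl|\frac{\hat w}{\mathbb{P}_n\hat w}-\frac{w}{E_{P_B}[w]}\bigr|\,dP$. Where you genuinely diverge is the weight term. The paper does not exploit the ratio structure at this stage: it shows only that $\bigl|\frac{\hat w}{\mathbb{P}_n\hat w}-\frac{w}{E_{P_B}[w]}\bigr|=o_P(1)$ (via Lemma \ref{lemma:eta_w}, the Markov-inequality argument for $\mathbb{P}_n\hat w$, Slutsky, and uniform integrability), so its own final display reads ``$o_P(n^{-1/2})+o_P(1)$,'' which is strictly weaker than the claimed $T_2=o_P(n^{-1/2})$. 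Your route --- carrying the weight bias through the Takatsu linearization so that the numerator and denominator pieces combine into $P[(\hat w-w)\,r]$ with $r=\{\beta_{LATE}(\cdot)-\beta^A_{LATE}\}(m_1-m_0)$ and $E_{P_B}[w\,r]=0$ by Assumption 7 --- is not in the paper and buys something real: it reduces the problematic term to a centered inner product, which is the natural object on which to impose either a rate on $\|\hat\eta-\eta\|$ or an orthogonality/calibration property of the fitted weights. The obstacle you flag is therefore not a defect of your write-up but a genuine gap that the paper's proof shares and glosses over: Assumption 8 as stated supplies only consistency of $\hat\eta$, and neither argument closes the $o_P(n^{-1/2})$ bound for the weight contribution without an additional rate (or orthogonality) condition. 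Under the reading that ``proper specification'' implicitly supplies such a rate, your proof is complete and somewhat sharper than the original.
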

    \begin{proof}
    For $T_2$, we must derive the remainder term $R_2(P,\hat{P}) = E_P[\hat{T}_{\text{1-step}}] - T(P_B)$ for the numerator. We will begin first with $T(P_B) = \frac{E_{P_B}[w(X)E[Y|Z=1,X]]}{E_{P_B}[w(X)]}$:
    \begin{align}
        R_2(P,\hat{P}) &= \bigintssss \frac{\hat{w}}{\mathbb{P}_n\hat{w}}\bigg[\frac{Z}{\hat{e}}(Y-\hat{\mu}_1) + \hat{\mu}_1 \bigg] dP - \frac{E_{P_B}[w\mu_1]}{E_{P_B}[w]} \\
        &= \bigintssss \frac{\hat{w}}{\mathbb{P}_n\hat{w}} \bigg[\frac{\mu_1 e}{\hat{e}} -\frac{\hat{\mu}_1}{\hat{e}} + \hat{\mu}_1\bigg] dP - \frac{E_{P_B}[w\mu_1]}{E_{P_B}[w]}\\
        &= \bigintssss \frac{\hat{w}}{\mathbb{P}_n\hat{w}} \bigg[\frac{-\hat{e}}{e}(\hat{\mu}_1 - \mu_1) + \hat{\mu}_1 + (\mu_1 - \mu_1)\bigg] dP - \frac{E_{P_B}[w\mu_1]}{E_{P_B}[w]} \\
        &= \bigintssss \frac{\hat{w}}{\mathbb{P}_n\hat{w}} \bigg[\frac{1}{\hat{e}}(\hat{e} - e)(\hat{\mu}_1 - \mu_1)\bigg]dP + \bigintssss \mu_1 \bigg[\frac{\hat{w}}{\mathbb{P}_n\hat{w}} - \frac{w}{E_{P_B}[w]}\bigg]dP
    \end{align}
    \noindent where the second equality follows by iterated expectation and the last follows because $P$ is a joint distribution that includes $P_B$. Thus, taking the absolute value by bounding assumptions, triangle equality, and Cauchy-Schwarz, we have
    \begin{align}
        |R_2(P,\hat{P})| &\le \frac{1}{C_{\hat{e}}}\bigintssss |\frac{\hat{w}}{\mathbb{P}_n\hat{w}}||\hat{e} - e||\hat{\mu}_1 - \mu_1| dP + C_{\mu_1}\bigintssss\big|\frac{\hat{w}}{\mathbb{P}_n\hat{w}} - \frac{w}{E_{P_B}[w]} \big|dP\\
        &\le \frac{C_w}{C_{\hat{e}}}\norm{\hat{e} - e}\norm{\hat{\mu}_1 - \mu_1} + C_{\mu_1}\bigintssss\bigg|\frac{\hat{w}}{\mathbb{P}_n\hat{w}} - \frac{w}{E_{P_B}[w]} \bigg|dP\\
        &= o_P(n^{-1/2}) + o_P(1).
    \end{align}
    The first term is $o_P(n^{-1/2})$ by boundedness of $\norm{\frac{\hat{w}}{\mathbb{P}_n\hat{w}}}$ and the product $\norm{\hat{e} - e}\norm{\hat{\mu}_1 - \mu_1} = o_P(n^{-1/2})$, which holds if $\norm{\hat{e} - e} = o_P(n^{-1/4})$ and $\norm{\hat{\mu}_1 - \mu_1} = o_P(n^{-1/4})$. Now we will focus on the second term, where we will show $L_1$ convergence.
    
    Firstly, we have $|\hat{w} - w| = o_P(1)$ by Lemma \ref{lemma:eta_w}. Furthermore, when proving the rate of $T_1$, we showed that $|\mathbb{P}_n\hat{w} - E_{P_B}[w]| = o_P(1)$. Thus, by Slutsky's theorem we have that $|\frac{\hat{w}}{\mathbb{P}_n\hat{w}} - \frac{w}{E_{P_B}[w]}| = o_P(1)$. Given that we have uniform integrability due to boundedness of $\frac{\hat{w}}{\mathbb{P}_n\hat{w}}$ so we have $L_1$ convergence or, in other words, $E\bigg[\big|\frac{\hat{w}}{\mathbb{P}_n\hat{w}} - \frac{w}{E_{P_B}[w]}\big|\bigg] = o_P(1)$.
    \end{proof}
    
    The proof for $T({P_B}) = \frac{E_{P_B}[w(X)\{\mu_1-\mu_0\}]}{E_{P_B}[w(X)]}$ is the same as the one above where we repeat the process for $\mu_1$ and $\mu_0$ because it takes the form of a WATE. The proof for the denominator mirrors that of the numerator. Now we are ready to prove convergence and derive the asymptotic variance of our estimator.

    From the result of Takatsu Lemma S.1, assuming there exists $\epsilon > 0$ such that $|\phi_{num}(\mathbf{B};\hat{w},\hat{\zeta_z})|\wedge|\phi_{denom}(\mathbf{B};\hat{w},\hat{\zeta_z})| > \epsilon$, then we have the following asymptotically linear form
    \begin{align*}
        \frac{\mathbb{P}_n\hat{\phi}_{num}}{\mathbb{P}_n\hat{\phi}_{num}} - \frac{P\phi_{num}}{P\phi_{denom}} &= \mathbb{P}_n\bigg\{\bigg(\frac{E_{P_B}[w\{m_1(X) - m_0(X)\}]}{E[w(X)]}\bigg)^{-1}\bigg(\mathbb{IF}_{num}-\beta^A_{LATE}\mathbb{IF}_{denom}\bigg)\bigg\} + o_P(n^{-1/2})\\
        &= \mathbb{P}_n\bigg\{\bigg(\frac{E_{P_B}[w\{m_1(X) - m_0(X)\}]}{E_{P_B}[w(X)]}\bigg)^{-1}\bigg(\frac{2Z-1}{e(X,Z)}\frac{w(X)}{E[w(X)]}\{Y-\mu_z(X)\} + \frac{w\{\mu_1(X) - \mu_0(X)\}}{E_{P_B}[w(X)]}\\
        &- \beta^A_{LATE}\bigg[\frac{2Z-1}{e(X,Z)}\frac{w(X)}{E_{P_B}[w(X)]}\{D-m_z(X)\} + \frac{w\{m_1(X) - m_0(X)\}}{E_{P_B}[w(X)]}\bigg]\bigg)\bigg\} + o_P(n^{-1/2})\\
        &= \mathbb{P}_n\Bigg(\frac{w(X)}{E_{P_B}[w(X)\{m_1(X) - m_0(X)\}]}\bigg\{\frac{2Z-1}{e(X,Z)}\bigg[Y-\mu_Z(X)-\beta^A_{LATE}\{D-m_Z(X)\}\bigg]\\
        &+ \mu_1(X) - \mu_0(X) - \beta^A_{LATE}\{m_1(X) - m_0(X)\}\Bigg) + o_P(n^{-1/2})
    \end{align*}
    \noindent where $e(X,Z) = e(X)Z + \{1-e(X)\}(1-Z)$. We can observe this is a sample mean of the influence function for the weighted LATE and, thus, after multiplying each side by $n^{-1/2}$, we have that $n^{-1/2}(\hat{\beta}^A_{LATE} - \beta^A_{LATE}) \overset{d}{\to} N(0,E_{P_B}[\Gamma^2])$ where
    \begin{align*}
        \Gamma &= \frac{w(X)}{E_{P_B}[w(X)\{m_1(X) - m_0(X)\}]}\bigg\{\frac{2Z-1}{e(X,Z)}\bigg[Y-\mu_Z(X)-\beta^A_{LATE}\{D-m_Z(X)\}\bigg]\\
        &+ \mu_1(X) - \mu_0(X) - \beta^A_{LATE}\{m_1(X) - m_0(X)\}\bigg\}
    \end{align*}
    \noindent via Slutsky's theorem and the standard central limit theorem, giving us $\sqrt{n}$-convergence.
\end{proof}

\section{Acknowledgements and Data Avaliability}

RSZ is supported by NIH/NIA T32
AG073088-03. DLG is supported by NIH/NIA 5P30AG066519 and 1RF1AG075107.

Data used in the preparation of this article were obtained from the Alzheimer’s Disease Neuroimaging Initiative (ADNI) database (adni.loni.usc.edu). The ADNI was launched in 2003 as a public-private partnership, led by Principal Investigator Michael W. Weiner, MD. The primary goal of ADNI has been to test whether serial magnetic resonance imaging (MRI), positron emission tomography (PET), other biological markers, and clinical and neuropsychological assessment can be combined to measure the progression of mild cognitive impairment (MCI) and early Alzheimer’s disease (AD). The ADNI dataset is avaliable to the public at http://adni.loni.usc.edu via the completion of an online application form and acceptance of data use agreement. All participants in ADNI provided informed consent. For details regarding the ethics statement of the ADNI study population please see: https://adni.loni.usc.edu.

Data collection and sharing for this project was funded by the Alzheimer's Disease
Neuroimaging Initiative (ADNI) (National Institutes of Health Grant U01 AG024904) and
DOD ADNI (Department of Defense award number W81XWH-12-2-0012). ADNI is funded
by the National Institute on Aging, the National Institute of Biomedical Imaging and
Bioengineering, and through generous contributions from the following: AbbVie, Alzheimer’s
Association; Alzheimer’s Drug Discovery Foundation; Araclon Biotech; BioClinica, Inc.;
Biogen; Bristol-Myers Squibb Company; CereSpir, Inc.; Cogstate; Eisai Inc.; Elan
Pharmaceuticals, Inc.; Eli Lilly and Company; EuroImmun; F. Hoffmann-La Roche Ltd and
its affiliated company Genentech, Inc.; Fujirebio; GE Healthcare; IXICO Ltd.; Janssen
Alzheimer Immunotherapy Research \& Development, LLC.; Johnson \& Johnson
Pharmaceutical Research \& Development LLC.; Lumosity; Lundbeck; Merck \& Co., Inc.;
Meso Scale Diagnostics, LLC.; NeuroRx Research; Neurotrack Technologies; Novartis
Pharmaceuticals Corporation; Pfizer Inc.; Piramal Imaging; Servier; Takeda Pharmaceutical
Company; and Transition Therapeutics. The Canadian Institutes of Health Research is
providing funds to support ADNI clinical sites in Canada. Private sector contributions are
facilitated by the Foundation for the National Institutes of Health (www.fnih.org). The grantee
organization is the Northern California Institute for Research and Education, and the study is
coordinated by the Alzheimer’s Therapeutic Research Institute at the University of Southern
California. ADNI data are disseminated by the Laboratory for Neuro Imaging at the
University of Southern California.

\bibliographystyle{unsrtnat}
\bibliography{references}

\end{document}